\newcommand{\Var}{\mathrm{Var}} 
\newcommand{\Cov}{\mathrm{Cov}} 
\theoremstyle{plain}
\newtheorem{theorem}{Theorem}[section]
\newtheorem{lemma}[theorem]{Lemma}
\keywords{bounds on probability; effective sample size; meta-analysis; random effects model; uncertainty quantification}
\begin{document}

\title[Iterative importance sampling with Markov chain Monte Carlo sampling]{Iterative importance sampling with Markov chain Monte Carlo sampling in robust Bayesian analysis}

\author{Ivette Raices Cruz$^1$}
\address{$^1$Centre for Environmental and Climate Science, Lund University, Lund, Sweden}
\email{ivette.raices\_cruz@cec.lu.se}

\author{Johan Lindstr{\"o}m$^2$}
\address{$^2$Centre for Mathematical Sciences, Lund University, Lund, Sweden}
\email{johan.lindstrom@matstat.lu.se}

\author{Matthias C.M. Troffaes$^3$}
\address{$^3$Durham University, Department of Mathematical Sciences, UK}
\email{matthias.troffaes@durham.ac.uk}

\author{Ullrika Sahlin$^1$}
\email{ullrika.sahlin@cec.lu.se}

\begin{abstract}
Bayesian inference under a set of priors, called robust Bayesian analysis, allows for estimation of parameters within a model and quantification of epistemic uncertainty in quantities of interest by bounded (or imprecise) probability. 
Iterative importance sampling can be used to estimate bounds
on the quantity of interest by optimizing over the set of priors. 
A method for iterative importance sampling when the robust Bayesian inference rely on Markov chain Monte Carlo (MCMC) sampling is proposed.
To accommodate the MCMC sampling in iterative importance sampling, a new expression for the effective sample size of the importance sampling is derived, which accounts for the correlation in the MCMC samples. 
To illustrate the proposed method for robust Bayesian analysis, iterative importance sampling with MCMC sampling is applied to estimate the lower bound of the overall effect in a previously published meta-analysis with a random effects model. 
The performance of the method compared to a grid search method and under different degrees of prior-data conflict is also explored.
\end{abstract}

\maketitle

\section{Introduction}

Bayesian analysis quantifies uncertainty by precise probability derived from a prior (subjective) distribution for parameters and a 
likelihood for data given parameters \citep{Gelman_2013}. 
Whereas statistical Bayesian inference usually uses non-informative priors as default, there are exceptions motivating the use of informative priors to reduce complexity \citep{Simpson_2017}. In a decision context where one wants to use the best possible knowledge, informative priors are
useful, or even needed, to integrate data with expert knowledge. Specifying a precise informative prior may be difficult, in particular for a model with many parameters or when experts disagree \citep{Rinderknecht_2012, Insua_2000}.

Robust Bayesian analysis is a way to consider the impact of the choice of prior on uncertainty 
in relevant quantities. The impact of different priors in Bayesian inference is important to evaluate for two reasons. First, it is common that more than one prior probability distribution could reasonably be chosen for the problem at hand. Second, when information in data is weak (e.g. for small sample sizes), the choice of prior could matter a lot for the final outcome of an analysis.
Robust Bayesian analysis has been used for sensitivity analysis towards the choice of prior \citep{Berger_1990}. 

A type of robust Bayesian analysis is to use sets of prior distributions or sets of likelihoods resulting in sets of posterior distributions. This can be seen as an extension of Bayesian inference which quantifies uncertainty by bounded (imprecise) probability
instead of precise probability \citep{Walley_1991}.  
In robust Bayesian analysis, one is often interested in estimating bounds on expectation.  For instance, the lower bound on expectation of a function $f$ with respect to a set of posterior distributions, $\mathcal{M}$, is expressed as
\begin{equation}
\underline{\mathrm{E}}(f) \coloneqq \displaystyle \inf_{p \in \mathcal{M}} \int f(x)p(x) dx. \label{LowerExpectation}
\end{equation}
Note that a set of posterior distributions is derived from a set of prior distributions.  

Robust Bayesian analysis using sets of priors has been developed in a closed analytic form for conjugate models \citep{Bernard_2005, Quaeghebeur_2005, Walley_1996}. 
In \citep{Wei_2017}, a range of posterior expectations 
are computed using a Monte Carlo method when considering uncertainty regarding the prior or likelihood.

Importance sampling has also been used to estimate bounds on expectations using independent samples drawn from arbitrary (e.g. not necessarily 
conjugate) models, as long as the posterior can be analytically evaluated up to a 
normalization constant \citep{Fetz_2017, Troffaes_2017, Troffaes_2018, Fetz_2018}.
In an iterative version of importance sampling, it has been suggested to iteratively change the sampling (also called proposal) distribution 
of importance sampling, in order to get an effective sample size (i.e. a measure of efficiency) as close as possible to the actual sample size \citep{Troffaes_2017, Troffaes_2018}. 
A small effective sample size means that the weights of importance sampling  are too imbalanced and thus might be unreliable.

In \citep{Troffaes_2018}, it is also suggested to use the posterior distribution directly as a sampling distribution where possible. 
The use of the posterior allows, in theory, for the effective sample size to be maximized 
across iterations.
For this reason, the effective sample size of the importance sampling estimator is used as the stopping criterion of iterative importance sampling in \citep{Troffaes_2018}. 
However, the effective sample size can be very poor if the sampling distribution is not 
carefully chosen, i.e. if the initial choice of posterior is far from the posterior that, say, minimizes the expectation of the quantity of interest. Further, the method can have issues with convergence across iterations as a result.
The required number of samples for accurate estimation using importance sampling has also been discussed in \citep{Agapiou_2017, Chatterjee_2018, Sanz_2018} by means of the Kullback Leibler divergence.

Markov chain Monte Carlo (MCMC) sampling is a
method for Bayesian inference which does not require a closed form of the posterior \citep{Brooks_2011, Gelman_2013}. MCMC sampling allows for inference of simple as well as complex models. So far, few robust Bayesian analysis have used MCMC sampling (see \citep{Vernon_2017} for an example). This is mainly due to limitations in existing methods, such as requirements on knowing the analytical form of the posteriors. The inability to use MCMC sampling severely restricts the use of robust Bayesian analysis on more complex models.

To use iterative importance sampling with MCMC samples there is a need to modify the effective sample size that is used in the stopping criterion in iterative importance sampling. In this paper, we combine iterative importance sampling with MCMC sampling by extending the method from \citep{Troffaes_2017, Troffaes_2018} to a wider range of models, specifically those requiring MCMC sampling. To accomplish this, we derive an expression for the effective sample size which accounts for correlated MCMC samples. 

In \citep{Liesenfeld_2008}, an efficient importance sampling is proposed for improving a MCMC algorithm. The efficient importance sampling consists of selecting a proposal distribution (given a density kernel) using a least squares problem and then using the proposed distribution in an independent Metropolis Hasting sampling. 
Moreover, in \citep{Llorente_2021}, a layered adaptive importance sampling algorithm is presented which combined MCMC algorithms with importance sampling and different strategies to re-use generated samples. The layered adaptive importance sampling algorithm generates samples using two layers. The upper layer generates samples using a MCMC algorithm which are later used in a multiple importance sampling scheme (lower layer) \citep{Llorente_2021}. A recycling layered adaptive importance sampling scheme is presented which re-uses the samples from the upper layer in the lower layer \citep{Llorente_2021}. This scheme is similar to the method proposed in this paper. 
However, a key difference between the \citep{Liesenfeld_2008} and \citep{Llorente_2021} papers and this paper is that we re-use the MCMC samples from one run by weighting with a different prior, and that we use an efficient sample size for determining when to re-run the MCMC sampler, which leads to fewer MCMC runs to cover the hyperparameter space. 

A robust Bayesian analysis allows for a quantification of uncertainty in quantities of interest which is robust to the choice of prior. It can also be useful for Bayesian inference when priors are given as sets. To demonstrate the proposed method, iterative importance sampling with MCMC sampling is applied to estimate the lower bound of the overall effect of biomanipulation of freshwater lakes in an already published meta-analysis \citep{Bernes_2015}. 

\section{Importance sampling}
\label{sec:IS}

Recall that the expectation of a function $f$ with respect to a \emph{target distribution}, $p$ is given by
\begin{equation}
\mu \coloneqq
\mathrm{E}_p \Bigl(f(x)\Bigr) =  \displaystyle \int f(x)p(x) dx,
\end{equation}
which can be approximated by the standard Monte Carlo estimator of $\mathrm{E}_p \Bigl(f(x)\Bigr)$ as 
\begin{equation}
\overline{\mu} \coloneqq  \frac{1}{N}\displaystyle \sum_{i=1}^N  f(X_i), 
\label{Eq:Exp_MC}
\end{equation}
where  $X_i \sim p$ are independent and identically distributed (i.i.d) samples.

Sometimes, it is difficult to draw samples directly from $p$ or there is a mismatch between $f(x)$ and $p(x)$ \citep{Owen_2013}. 
In this case, importance sampling could be applied to estimate the expectation by weighting samples drawn from a \emph{sampling distribution}, $q$, from which it is easier to generate samples
\citep{Egloff_2010}. This technique has been applied in Bayesian inference \citep{Gelman_2013} and in numerical integration \citep{Liu_2008, Owen_2013}.

The sampling density function $q$ must be such that $q(x) > 0$ whenever $p(x)f(x) \neq 0$. 
Sometimes $p$ is only known up to a normalization constant,  say only $p_u = cp$ is known, 
where $c > 0$ is an unknown constant.
The expectation of $f$ with respect to $p$ can be written as
\begin{align}
\mu \coloneqq
\mathrm{E}_p\Bigl(f (x)\Bigr)  &= \frac{\displaystyle \int f(x)p_u(x) dx}{\displaystyle \int p_u(x) dx} = \frac{\displaystyle \int f(x) w_p(x) q(x) dx}{\displaystyle \int w_p(x)q(x) dx} = \frac{\mathrm{E}_q\Bigl(f(x)w_p(x)\Bigr)}{\mathrm{E}_q\Bigl(w_p(x)\Bigr)}, 
\end{align}
where $w_p(x)\coloneqq \frac{p_u(x)}{q(x)}$.

This expectation can be estimated by \emph{self-normalized importance sampling}, which is defined as
\begin{equation}
\widetilde{\mu}
\coloneqq
\frac{\sum_{i=1}^N f(X_i)w_p(X_i)}{\sum_{i=1}^N w_p(X_i)}
\qquad\text{where }X_i\sim q \text{ (i.i.d)}.
\label{Eq:Exp_IS}
\end{equation}

In the following, we shall relax the assumption of independence.
At this point,
it suffices to point out that \cref{Eq:Exp_IS} is still
a valid estimator of $\mathrm{E}_p \Bigl(f(x) \Bigr)$
even if the $X_i$ are dependent.

A measure to assess the quality of the importance sampling estimator is the effective sample size, $\mathrm{ESS}$. It is defined by \citep{Kong_1992}, as the ratio of the variances of $\overline{\mu}$ and $\widetilde{\mu}$ 
estimators, in \cref{Eq:Exp_MC} and \cref{Eq:Exp_IS}, scaled to N:
\begin{align}
\mathrm{ESS}
\coloneqq
\frac{N\Var_p(\overline{\mu})}{\Var_q(\widetilde{\mu})}. \label{Eq:ESS}
\end{align}
The effective sample size represents the number of
standard Monte Carlo samples that are needed for both estimators to have the
same variance.

If the samples $X_i$ from  $q$ are i.i.d. then $\textrm{ESS}$ can be estimated by 
\begin{equation}
\mathrm{ESS_{IS}} \coloneqq
\frac{(\sum_{i=1}^N w_p(X_i))^2}{\sum_{i=1}^N w^2_p(X_i)},
\label{Eq:ESS_aprox}
\end{equation}
see \citep{Owen_2013}. 
However, this formula is not applicable for correlated samples, as would be the case if the $X_i$ are sampled
from $q$ using an MCMC algorithm.

\section{Effective sample size of importance sampling using MCMC}
\label{sec:ESS}

Here, we want to derive an estimate of the effective sample size as defined in \cref{Eq:ESS} 
when the $X_i$ are sampled from $q$ through MCMC. First, we derive an approximation of $\Var_q(\widetilde{\mu})$.

Importance sampling with MCMC was introduced by \citep{Hastings_1970} for variance reduction \citep{Bhattacharya_2008}. Hastings \citep{Hastings_1970}
suggested the following approximate of the denominator in \cref{Eq:ESS}:
\begin{align}
\Var_q \left (\widetilde{\mu} \right ) = \Var_q\left ( \frac{\overline{Y}}{\overline{Z}} \right )
& \approx 
\frac{%
  \Var_q(\overline{Y}-\mu\overline{Z})}{%
  \Bigl(\mathrm{E}_q(\overline{Z})\Bigr)^2%
},
\label{Eq:est_Var_1}
\end{align}
where
\begin{align}
\overline{Y} & \coloneqq \frac{1}{N}\sum_{i=1}^N f(X_i)w_p(X_i),   \\
\overline{Z} & \coloneqq \frac{1}{N}\sum_{i=1}^N w_p(X_i), 
\end{align}
and $\mu\coloneqq\mathrm{E}_p\Bigl(f(x)\Bigr)$ as defined earlier.

First, let us evaluate the denominator in \cref{Eq:est_Var_1}.
Note that
\begin{equation}
\label{eq:expqwpxi}
\mathrm{E}_q\Bigl(w_p(X_i)\Bigr)
=\int w_p(x)q(x)dx
=\int \frac{c p(x)}{q(x)}q(x)dx
=c, 
\end{equation}
so $\mathrm{E}_q(\overline{Z})=c$, and we can approximate the denominator via
\begin{align}
\label{eq:eqzsquaredapprox}
\Bigl(\mathrm{E}_q(\overline{Z})\Bigr)^2
&\approx
\left(\frac{1}{N}\sum_{i=1}^N w_p(X_i)\right)^2. 
\end{align}
The numerator in \cref{Eq:est_Var_1} is more tricky.
Let
\begin{equation}
g(X_i)\coloneqq (f(X_i)-\mu) w_p(X_i). \label{Eq:g_tilde}
\end{equation}
With this notation, we get
\begin{align}
\Var_q(\overline{Y}-\mu\overline{Z})
& =  \frac{1}{N^2} \left ( 
\sum_{i = 1}^N \Var_q\Bigl(g(X_i)\Bigr)  + 2 \sum_{i < j } \Cov_q \Bigl (g(X_i), g(X_j) \Bigr) \right ). 
\end{align}
If the variables $X_i$ form a stationary stochastic process, as in a converged MCMC algorithm, then $\Var_q \Bigl(g(X_i)\Bigr)$ and  $\Cov_q\Bigl(g(X_i), g(X_{i+k})\Bigr)$ depend only on $k$ and not $i$.
Hence, with $X\coloneqq X_1$,
\begin{align}
\Var_q(\overline{Y}-\mu\overline{Z})
& = 
\Var_q\Bigl(g(X)\Bigr) \left ( \frac{1  + 2 \sum_{k=1}^{N-k} \rho_{g}(k)}{N} \right ), \label{Eq:Var_g_n-k}
\end{align}
where $\rho_{g}$ is the autocorrelation function of the stationary process.
According to \citep{Geyer_1992}, for large $N$, 
the variance is approximately 
\begin{align}
\Var_q(\overline{Y}-\mu\overline{Z})
& \approx
\Var_q\Bigl(g(X)\Bigr) \left ( \frac{1  + 2 \sum_{k=1}^\infty \rho_{g}(k)}{N} \right ). \label{Eq:Var_g}
\end{align}
If $\sum_{k=1}^\infty \rho_{g}(k)$ converges, then a standard approximation is \citep{Geyer_1992, Givens_2012}
\begin{equation}
\sum_{k=1}^\infty \rho_{g}(k)\approx \sum_{k=1}^\ell \hat{\rho}_{g}(k), 
\end{equation}
where $\ell$ is the first index for which $\hat{\rho}_{g}(\ell+1)<0$,
and where $\hat{\rho}_g$ is the empirical autocorrelation function of
the sample $g(X_1)$, \dots, $g(X_N)$. 
Note that evaluating $g$ requires knowledge of $\mu$ which is precisely
the quantity we wish to estimate. So, instead we will use
\begin{equation}
    \tilde{g}(X_i)
    \coloneqq (f(X_i) - \widetilde{\mu})w_p(X_i), 
\end{equation}
as an approximation for $g(X_i)$, and therefore use
\begin{equation}
\label{eq:approxsumrho}
\sum_{k=1}^\infty \rho_{g}(k)\approx \sum_{k=1}^\ell \hat{\rho}_{\tilde{g}}(k).
\end{equation}

We now estimate $\Var_q\Bigl(g(X)\Bigr)$ in \cref{Eq:Var_g}. Using \cref{Eq:g_tilde}, we also get that
\begin{align}
\Var_q\Bigl(g(X)\Bigr) &= \Var_q\Bigl(f(X) w_p(X) \Bigr) - 2 \mu \Cov_q \Bigl(f(X) w_p(X) , w_p(X)\Bigr) \nonumber \\ 
&\qquad + \mu^2 \Var_q \Bigl(w_p(X)\Bigr). 
\label{Eq:Var_g_tilde_back}
\intertext{Now, following the same ideas as in \citep{Elvira_2018}[p.~5-6] (see \ref{Calculations} for details) we get }
\Var_q\Bigl(g(X)\Bigr) & \approx N \Var_p(\overline{\mu})\left(\frac{1}{N}\sum_{i=1}^N w_p^2(X_i)\right).
\end{align}
Putting everything together, we get
\begin{align}
 \Var_q(\widetilde{\mu})  &\approx  N \Var_p(\overline{\mu})  \frac{\frac{1}{N}\sum_{i=1}^N w_p^2(X_i)}{\left(\frac{1}{N}\sum_{i=1}^N w_p(X_i)\right)^2}
 \left ( \frac{1  + 2\sum_{k=1}^\ell \hat{\rho}_{\tilde{g}}(k)}{N} \right ), \\ 
 & =  \frac{N^2\Var_p(\overline{\mu})}{\mathrm{ESS_{IS}} \cdot \mathrm{ESS_{MCMC}}},
 \label{Eq:Final_Var_Importance_sampling}
\end{align}
where $\mathrm{ESS_{IS}}$ is the standard estimate of the effective sample size for importance sampling with independent samples, \cref{Eq:ESS_aprox},
and $\mathrm{ESS_{MCMC}}$ is the MCMC effective sample size (for $\tilde{g}$),
\begin{equation}
  \mathrm{ESS_{MCMC}}\coloneqq
  \frac{N}{1  + 2\sum_{k=1}^\ell \hat{\rho}_{\tilde{g}}(k)}. \nonumber
\end{equation}

Substituting \cref{Eq:Final_Var_Importance_sampling} into \cref{Eq:ESS},
we finally obtain the following estimate for the combined $\mathrm{ESS}$
\begin{align}
\mathrm{ESS}
& \approx 
\frac{ \mathrm{ESS_{MCMC}}}{N} \cdot \mathrm{ESS_{IS}}.
\label{Eq: N_eff_estimated_MCMC_IS}
\end{align}
So, what is new in \cref{Eq: N_eff_estimated_MCMC_IS} with respect to \cref{Eq:ESS_aprox} is 
the factor $\frac{\mathrm{ESS_{MCMC}}}{N}$ which accounts for
a reduction in effective sample size
due to the correlation of the MCMC samples (a reduction since it is very 
unlikely that the correlation will be negative).

\section{Importance sampling over a set of probability distributions}
\label{sec:IS_sets}

Let $\mathcal{M} \subset \mathbb{R}^d$ be a compact set and $p_t(\cdot) = \{p(\cdot|t)| t \in  \mathcal{M}$\} be a probability density function parameterized  by $t$ (i.e. hyperparameters).
The lower expectation of a function $f$ with respect to $p_t$ for all $t \in  \mathcal{M}$ 
is assumed to exist and is estimated by
\begin{equation}
\underline{\mathrm{E}}(f) \coloneqq \displaystyle \min_{t \in \mathcal{M}} \int f(x)p_t(x) dx. \label{Lower_Expectation}
\end{equation}

In practice, one can search for the minimum
using numerical methods. For instance, an iterative version of standard importance sampling has been used to estimate bounds on expectations in robust Bayesian analysis \citep{Fetz_2017,Troffaes_2017,Troffaes_2018,Fetz_2018}.

Using importance sampling, the lower expectation is estimated by 
\begin{equation}
\underline{\widehat{\mathrm{E}}}(f)  \approx \displaystyle \min_{t \in \mathcal{M}} \frac{\sum_{i=1}^N f(X_i)w_t(X_i)}{\sum_{i=1}^N w_t(X_i)}, \label{Lower_Expectation_IS}
\end{equation}
where $w_t(x) \coloneqq \frac{c p_t(x)}{q(x)}$.

Iterative importance sampling estimates the lower expectation 
of a function $f$ by moving the sampling distribution, $q$, towards the optimal distribution 
\citep{Fetz_2017,Troffaes_2017}. The stopping criterion is that the effective sample size of importance sampling 
should be close enough to the desired independent sample size (denoted by $\mathrm{ESS_{target}}$) that is fixed in advance.
In this paper, we adapt iterative importance sampling introduced in \citep{Fetz_2017,Troffaes_2017,Troffaes_2018}. 
Our contribution is the effective sample size of importance sampling with correlated MCMC samples which allows us to combine iterative importance sampling with MCMC sampling, thus allowing for robust analysis of more complex models. 
It is important to highlight that how large the sample size for MCMC samples should be and $\mathrm{ESS_{target}}$ in step 2 and step 4 respectively, are values fixed in advance (i.e. they are inputs to the procedure).
The method goes as follows:

\begin{description}
\item[Step 1] Set $t = t^0$ where $t^0$ is an initial value in the feasible region.
\item[Step 2] Generate samples from $q(x) = p_t(x)$ using MCMC sampling until the effective sample size for the MCMC sample is large enough (i.e. exceed a specified threshold). 
\item[Step 3] Find $t_* = \arg \min_{t \in \mathcal{M}} \frac{\sum_{i=1}^N f(X_i)w_t(X_i)}{\sum_{i=1}^N w_t(X_i)}$ using an optimization algorithm.
\item[Step 4] If $\mathrm{ESS} > \mathrm{ESS_{target}}$, or maximum number of iterations reached, then stop.
\item[Step 5] Set $t = t_*$ and go to Step 2.
\end{description}

The convergence of the method depends on the distributions and the parameter space. We have set a maximum number of iterations (i.e. in our case, 10 000) to stop the algorithm when it does not converge.

The condition of a large enough sample size is added in Step 2 to ensure that the 
optimization in Step 3 is based on a reliable sample. For example, in the application below, we first specify the $\mathrm{ESS_{target}}$ and then we require the effective sample size for the MCMC samples to be 20\% greater than the $\mathrm{ESS_{target}}$.
The stopping criterion in Step 4 uses the effective sample size for importance sampling with correlated samples \cref{Eq: N_eff_estimated_MCMC_IS}, 
controlling for both convergence of the MCMC and quality of importance sampling.
The reason for this stopping criterion is that MCMC sampling might require different numbers of iteration to produce a reasonable number of efficient samples.

Thinning chains in MCMC has been used in several papers; see for instance \citep{Croll_2006, Endo_2019, Kopylev_2009}. Thinning consists of taking every k-th sample instead of all of them in order to reduce autocorrelation. In \citep{Link_2012} it is shown that although thinning chains in MCMC reduces autocorrelation between MCMC samples, it also reduces the precision of the estimates (i.e.  the average over a thinned sample set has greater variance than the average over the unthinned sample) \citep{Geyer_1992}. Therefore, thinning is not advisable unless it is needed due to computer memory limitations \citep{Link_2012}. 

\section{An application}

To illustrate the proposed method for robust Bayesian analysis, iterative importance sampling with MCMC sampling is applied on a previously published meta-analysis investigating the effect of biomanipulation (the intervention) on water quality in freshwater lakes \citep{Bernes_2015}. 
We selected the random effects model (described below) for the meta-analysis of the change in the level of \emph{Chlorophyll a} before and during biomanipulation \citep{Bernes_2015}. Available data are estimated mean differences and estimation errors from 75 studies. The estimated effects range from -24.17 to 332.50 $\mu g/ l$ with a sample mean of 28.46 $\mu g/ l$. 
The 5th and 95th percentile of the data are -11.10 and 76.29 $\mu g/ l$ respectively. Here, a positive value corresponds to an improvement in water quality by biomanipulation (since we have turned the sign of the data). 
In addition, we investigate what happens with the performance of the suggested method when 
the set of priors is changed from a set with low to high prior-data conflict.

\subsection{A Bayesian Linear Random Effects Model} \label{subsection Linear RE}

The overall effect of the intervention $\mu$ is estimated by a linear random effects model (\Cref{fig: Bayesian RE meta-analysis}) according to
\begin{align}
y_i | \delta_i & \sim N(\delta_i, \sigma_i^2),   \\
\delta_i | \mu, k, \tau_{\mu} & \sim N(\mu, k^2 \hspace{0.5mm} \tau_{\mu}^2), \label{marginal_dist}
\end{align}
where $y_i$ is the observed intervention effect in study $i$ ($i = {1, \dots, N})$ with known within-study variance $\sigma_i^2$,
$\delta_i$ is the specific intervention effect in study $i$, and $k^2 \hspace{0.5mm} \tau_{\mu}^2$ is the between-study variance.
This model can be expressed in its marginal form \citep{Gelman_2013, Rover_2017} as 
\begin{align}
y_i | \mu, k, \tau_{\mu} \sim N(\mu, \sigma_i^2 + k^2 \hspace{0.5mm} \tau_{\mu}^2). 
\end{align}
To implement this model in a Bayesian framework, we select 
the following prior distributions for the parameters  
\begin{align}
\mu | \tau_{\mu} & \sim N(\mu_0,\tau_{\mu}^2), \label{Eq:prior1}  \\
\tau_{\mu} & \sim U(\tau_l,\tau_0), \label{Eq:prior2} \\    
k & \sim  U(k_l, k_u),\label{Eq:prior3} 
\end{align}
where $\mu_0$ is the mean of the overall intervention effect, $\tau_{\mu}$ is the 
standard deviation of the overall intervention effect which ranges between $\tau_l$ 
and $\tau_0$, 
$k$ is a proportionality constant which ranges between $k_l$ 
and $k_u$. 
We let $\tau_l = 1$, $k_l = 1$ and $k_u = 5$.

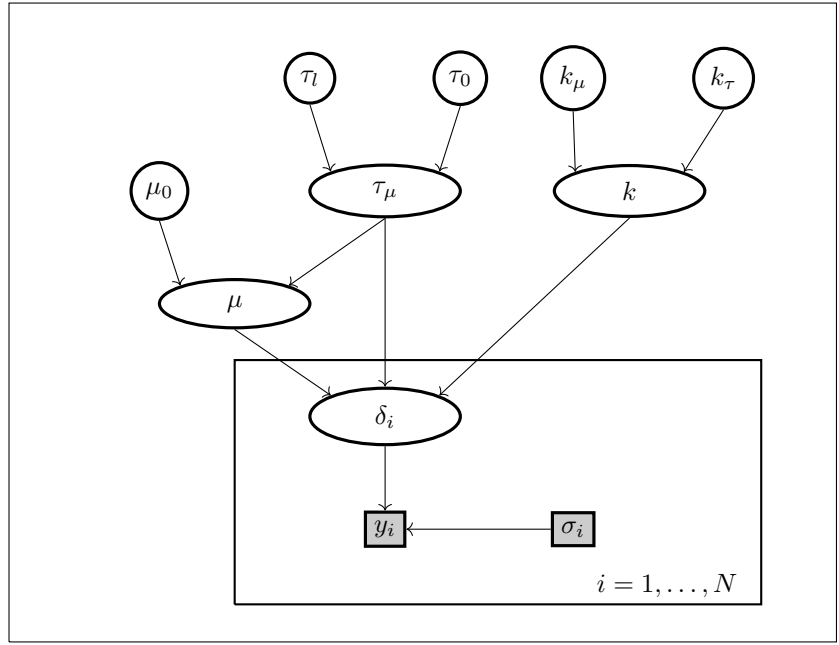
\begin{figure}[ht]
\begin{center}
			\begin{tikzpicture}[
	   			 ellipsenode/.style={ellipse, minimum width=2cm, minimum height=0.5cm, draw=black, very thick},
				 roundnode/.style={circle, draw=black, very thick, minimum size=5mm}, 
				 squarednode/.style={rectangle, draw=black, very thick, minimum size=3mm},
				 squarednode1/.style={rectangle, draw=black, fill=black!20, very thick, minimum size=3mm},
				 ]	
				 Nodes
				 \node[roundnode, align=center]          		(tau_low_level0)        	at (6, 21)   		{$\tau_l$};
				 \node[roundnode, align=center]          		(tau_0_level0)         		at (8, 21)   		{$\tau_0$};
				 \node[roundnode, align=center]          		(k_mu_level0)         		at (9.5, 21)   		{$k_\mu$};
				 \node[roundnode, align=center]          		(k_sd_level0)         		at (11.5, 21)   	{$k_{\tau}$};
				 \node[roundnode, align=center]          		(mu_mu_level1)         		at (4, 19.5)   		{$\mu_0$};
				 \node[ellipsenode, align=center]          		(mu__tau_level1)         	at (7, 19.5)   		{$\tau_{\mu}$};
				 \node[ellipsenode, align=center]         		(k_level1)      			at (10.25, 19.5)   	{$k$};
				 \node[ellipsenode, align=center]          		(mu_level2)        			at (5, 18)			{$\mu$};
				 \node[ellipsenode, align=center]          		(delta_level3)       		at (7, 16.5) 		{$\delta_i$};
				 \node[squarednode1, align=center]        	 	(y_level4)       			at (7, 15) 			{$y_i$};
				 \node[squarednode1, align=center]          	(sigma_level4)       		at (9.5, 15) 		{$\sigma_i$};
				 \node[align = center]							(indice_level6)  			at (10.75, 14.25)   {$i = 1, \dots, N$};
				 	 
				 Lines
				 \draw[->] (mu_mu_level1.south) 		-- 	(mu_level2.north west);
				 \draw[->] (mu__tau_level1.south) 		-- 	(mu_level2.north east);
				 \draw[->] (mu__tau_level1.south) 		-- 	(delta_level3.north);
				 \draw[->] (k_level1.south) 			-- 	(delta_level3.north east);
				 \draw[->] (k_mu_level0.south)		 	-- 	(k_level1.north west);
				 \draw[->] (k_sd_level0.south) 			-- 	(k_level1.north east);
				 \draw[->] (tau_low_level0.south)		-- 	(mu__tau_level1.north west);
				 \draw[->] (tau_0_level0.south) 		-- 	(mu__tau_level1.north east);
				 \draw[->] (mu_level2.south) 			-- 	(delta_level3.north west);
				 \draw[->] (delta_level3.south)			-- 	(y_level4.north);
				 \draw[->] (sigma_level4.west) 			-- 	(y_level4.east);
				 \draw[thick,-]  (5, 14) -- (5, 17.25) -- (12, 17.25) -- (12, 14) -- (5, 14); 
				 \draw[-]  (2, 13.5) -- (2, 22) -- (13, 22) -- (13, 13.5) -- (2, 13.5);
			\end{tikzpicture}
   	\end{center}
   	\caption{ Linear random effects graphical model. Unknown quantities (parameters) are represented by ellipses, known quantities (priors) by circles and observed data by filled squares. The plate indicates repeated cases.}
   	\label{fig: Bayesian RE meta-analysis}
	\end{figure}
	
The linear random effects model is implemented using MCMC sampling in Stan through 
the \emph {rstan} package, the R interface to Stan \citep{RStan} 
(see supplementary material for Stan code).

\subsection{Selecting a set of priors}

To expand the linear random effects model into a robust Bayesian framework, we consider a set of prior distributions
for $\mu$ and $\tau_\mu$. 

There are several methods to elicit priors from experts \citep{OHagan_2006, Hanea_2021, Hartmann_2020, SHELF_2018}, but there is no obvious alternative to specify a set of priors. The approach that we use to specify a set of priors for multiple parameters is chosen to illustrate robust Bayesian analysis with IIS and MCMC sampling, and other approaches could have been used as well. 
The prior distributions for each parameter are assumed to belong to the same family of probability
distributions (eq. \ref{Eq:prior1} and eq. \ref{Eq:prior2}), but with different hyperparameters. 
In order to consider interaction between parameters, the specification of priors is made using an approach similar to prior predictive check \citep{Daimon_2008, Schad_2019}.  
A compact set of hyperparameters is selected by comparing the distribution of the intervention effect for a random study conditional on the hyperparameters $\delta^*|\mu,\tau_{\mu}$ to an elicited range $R$. 

The selection of a set of priors can be summarized as:  
\begin{enumerate}
\item Specify a range $R$ where the effect size of a randomly selected study is expected to fall with a probability of at least h\% (i.e. target coverage).
\item Specify a regular grid of hyperparameters $(\mu_0, \tau_0)$. 
\item For each combination of hyperparameters $(\mu_0, \tau_0)$, generate a random sample $\delta^*_1, \dots, \delta^*_M$ from \cref{marginal_dist}.
\item Identify hyperparameters where the proportion of generated samples falling inside $R$ 
exceeds the target coverage $h$.
\item Find a function that discriminates hyperparameters complying with the target coverage and use the function to select a compact set of hyperparameters.
\end{enumerate}

Here, we select two sets of priors that represent situations with a low and high prior-data conflict, respectively. In actual applications, priors should be elicited by structured expert judgement \citep{OHagan_2006}.

A set of priors that represents a situation with low prior-data conflict
is derived using an elicited range of specific intervention effect of $R = [-20,80]$, a target coverage of $90\%$, and a regular grid of $200 \times 200$ of hyperparameters ($-100 \leq \mu_0 \leq 100$ and $5 \leq \tau_0 \leq 50)$. 
Following the procedure described above yields to the set 
\begin{equation}
\mathcal{M} =  \left\{\begin{array}{c}
										-8 \leq \mu_0 \leq 68 \\ 
										5 \leq \tau_0 \leq 16 \\
										r(\mu_0,\tau_0) - 0.9 \geq 0
										\end{array} \right \}, \nonumber
\end{equation}
where $r(\mu_0,\tau_0) = \tau_0 - (-0.01 \mu_0^2 + 0.46 \mu_0 + 9.56)$ is a function used to capture hyperparameters fulfilling the target coverage (\Cref{fig: Prior_Check}).

\begin{figure}
\centering
\resizebox*{9cm}{!}{\includegraphics{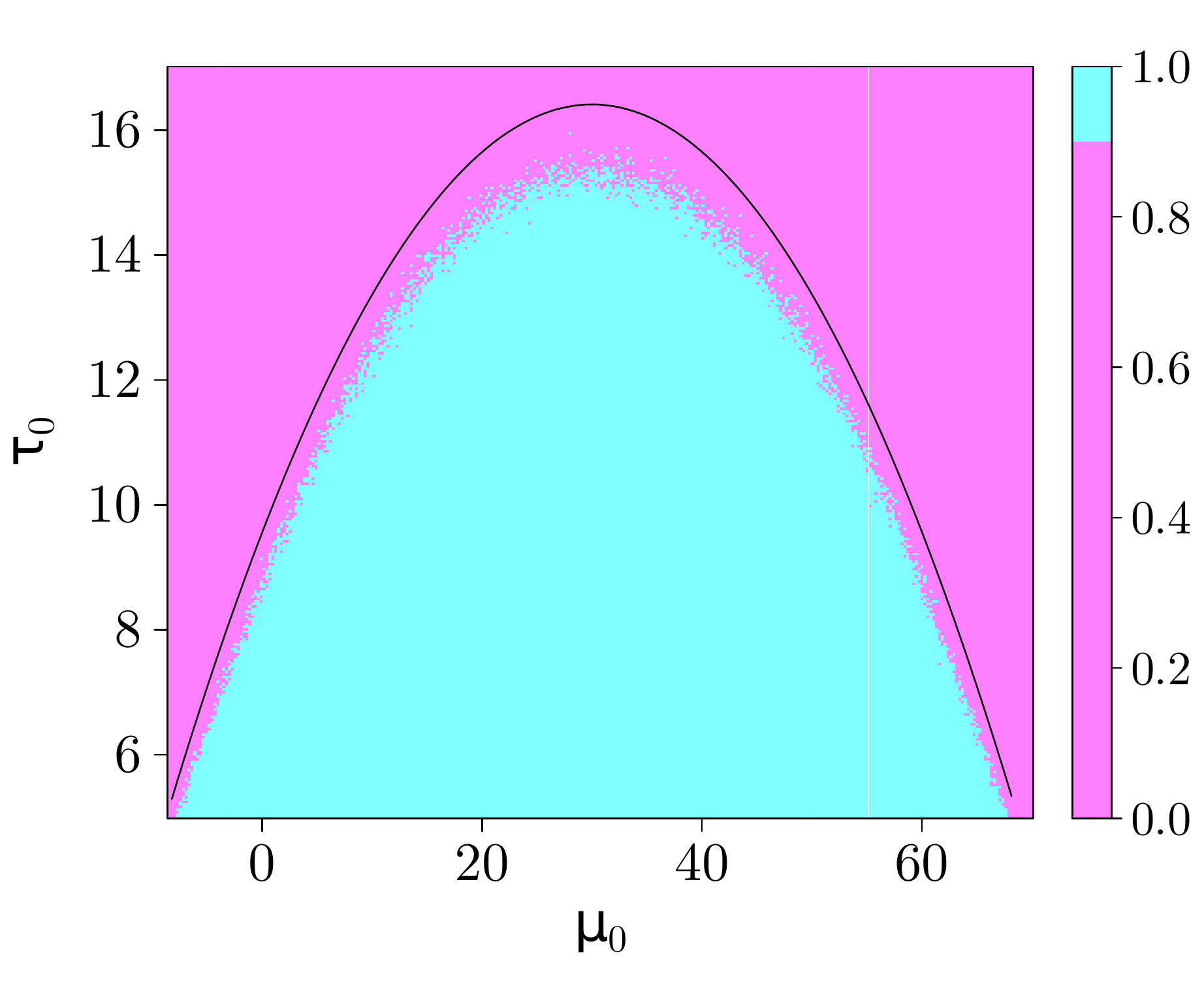}}
\caption{Prior check. Selection of priors, a curve fitted to the area where more than $90\%$ of the generated specific intervention effect fell inside of the elicited range.} \label{fig: Prior_Check}
\end{figure}

\subsection{Setting up the iterative importance sampling}

The quantity of interest is the expected value of the overall intervention effect ($\mu$) on the linear random effects model given in \cref{subsection Linear RE}. Iterative importance sampling is applied to estimate the lower bound of the quantity of interest over the compact 
domain of priors $\mathcal{M}$ (see \ref{proof_minimum} for the proof of the existence of the minimum in our example). The lower bound of $\mu$ is 
\begin{align}
		\underline{\widehat{\mathrm{E}}}(\mu|\mathbf{y}, \mu_0, \tau_0) 
		& \approx \min_{(\mu_0, \tau_0) \in \mathcal{M}} \frac{\sum_{i = 1}^N \mu^{(i)} w(X^{(i)}; \mu_0, \tau_0)}{\sum_{i = 1}^N w(X^{(i)}; \mu_0, \tau_0)},
\end{align}
where $X^{(i)} = \left \{ \mu^{(i)}, k^{(i)}, \tau_\mu^{(i)} \right \}$ and $\mu^{(i)}$ are 
samples drawn from a sampling distribution using MCMC based on hyperparameters $\mu_0^\prime$ 
and $\tau_0^\prime$. The weights are given by
\begin{align}
w(X^{(i)}; \mu_0, \tau_0) & = \frac{p(\mu^{(i)}, k^{(i)}, \tau_\mu^{(i)}| \mathbf{y}, \mu_0, \tau_0)}{p(\mu^{(i)}, k^{(i)}, \tau_\mu^{(i)}| \mathbf{y}, \mu_0^\prime, \tau_0^\prime)},
\label{expw}
\end{align}
where $p(\mu, k, \tau_\mu| \mathbf{y}, \mu_0, \tau_0)$ is the posterior distribution of the 
target distribution corresponding to hyperparameters $\mu_0$ and $\tau_0$. 

Using \cref{expw} as weights in \cref{Eq: N_eff_estimated_MCMC_IS} gives the effective 
sample size of importance sampling with MCMC samples.  
The search for the lower bound in Step 3 will stop when the optimal prior is close 
to the prior used in the MCMC (\Cref{fig: N_eff_contourplot}). 
How close is determined by the assigned $\mathrm{ESS_{target}}$ in Step 4, which 
in our example is set to 5 000 and 10 000 samples.
The effective sample size for the MCMC samples  
in Step 2 is set to exceed $\mathrm{ESS_{target}}$ by at least 20\%, i.e. 
MCMC sampling in Step 2 will be run until an effective sample size of 
6 000 and 12 000 samples has been reached.

\begin{figure}
\centering
\resizebox*{4.7cm}{!}{\includegraphics{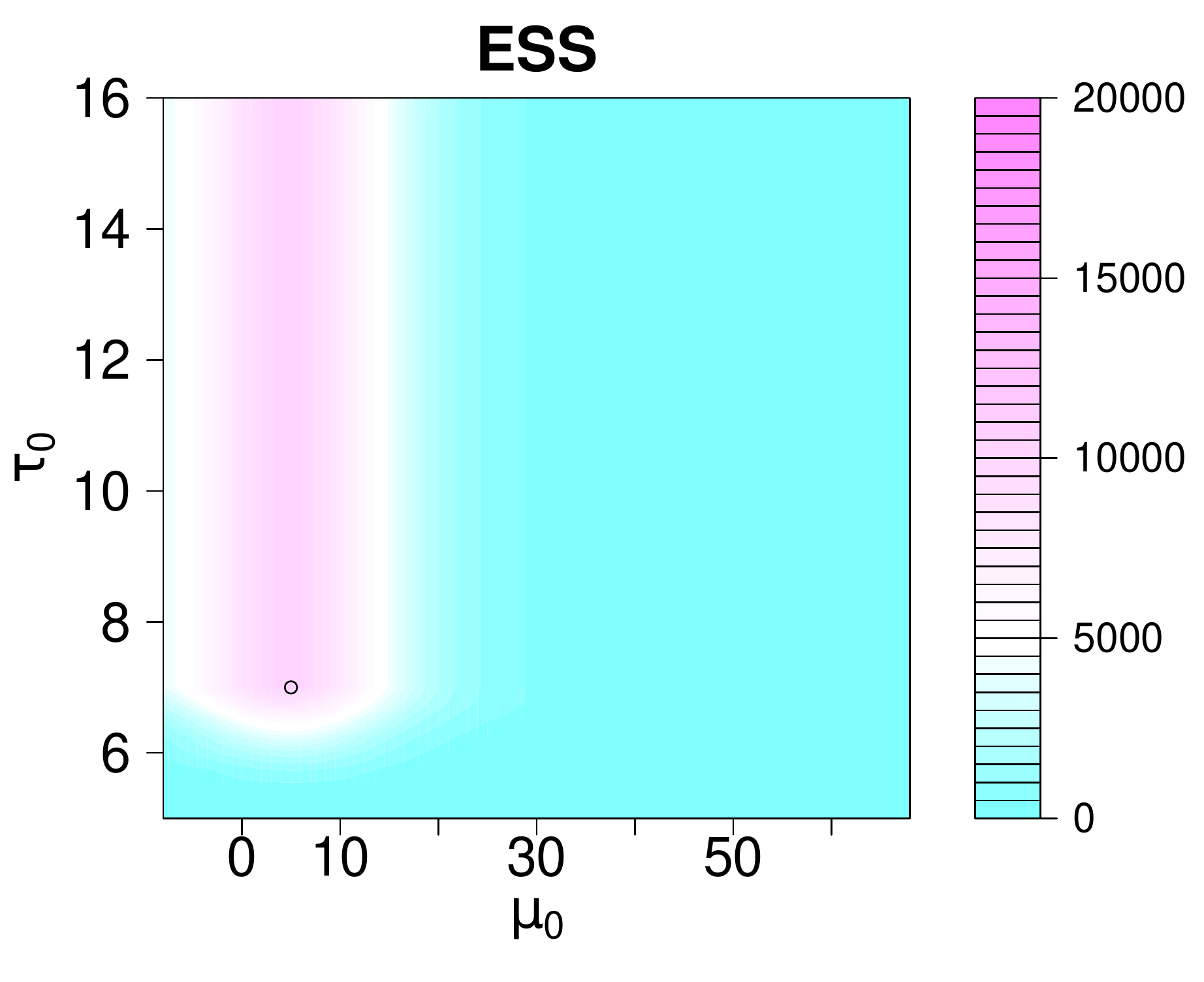}}
\resizebox*{4.7cm}{!}{\includegraphics{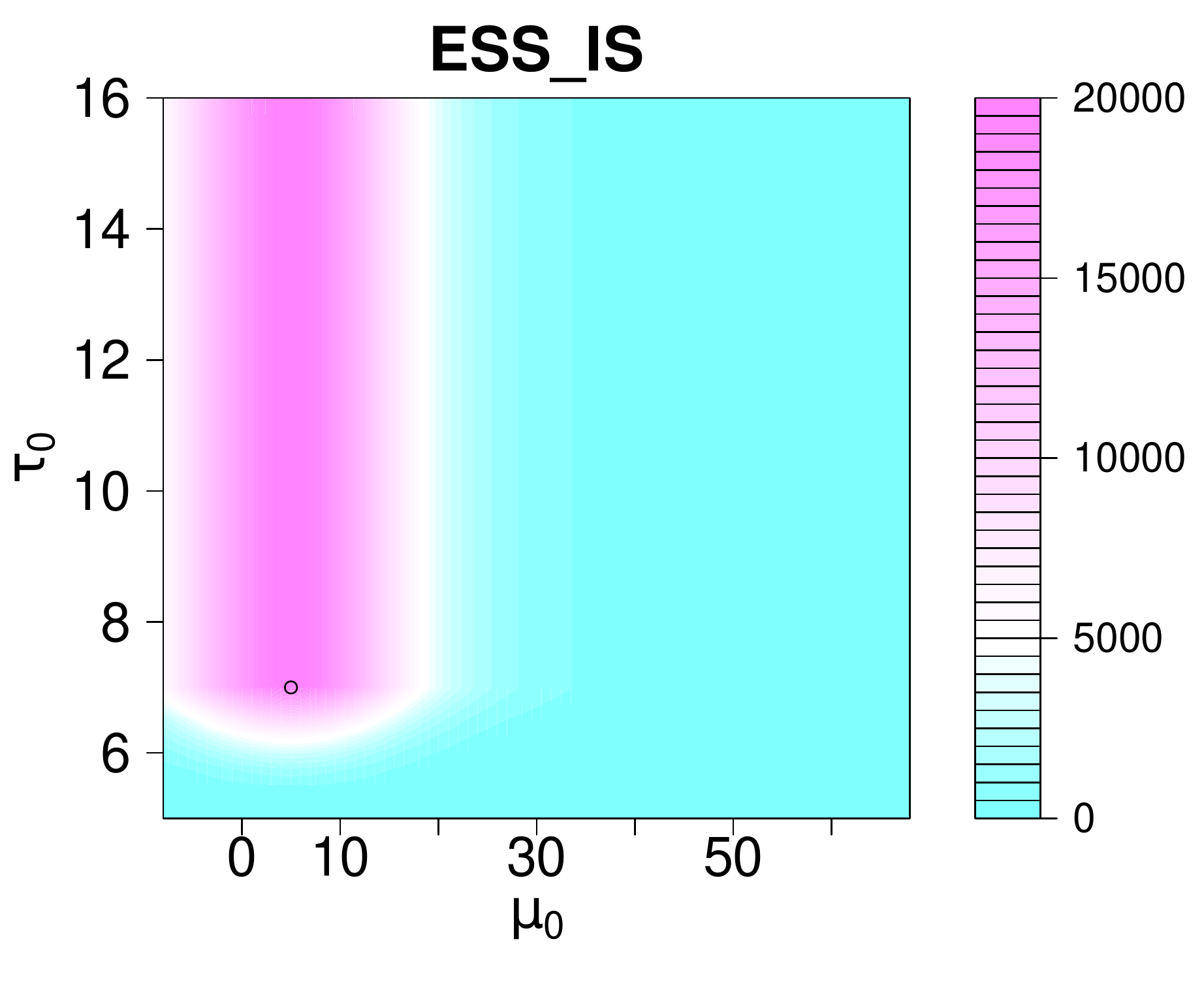}}
\resizebox*{4.7cm}{!}{\includegraphics{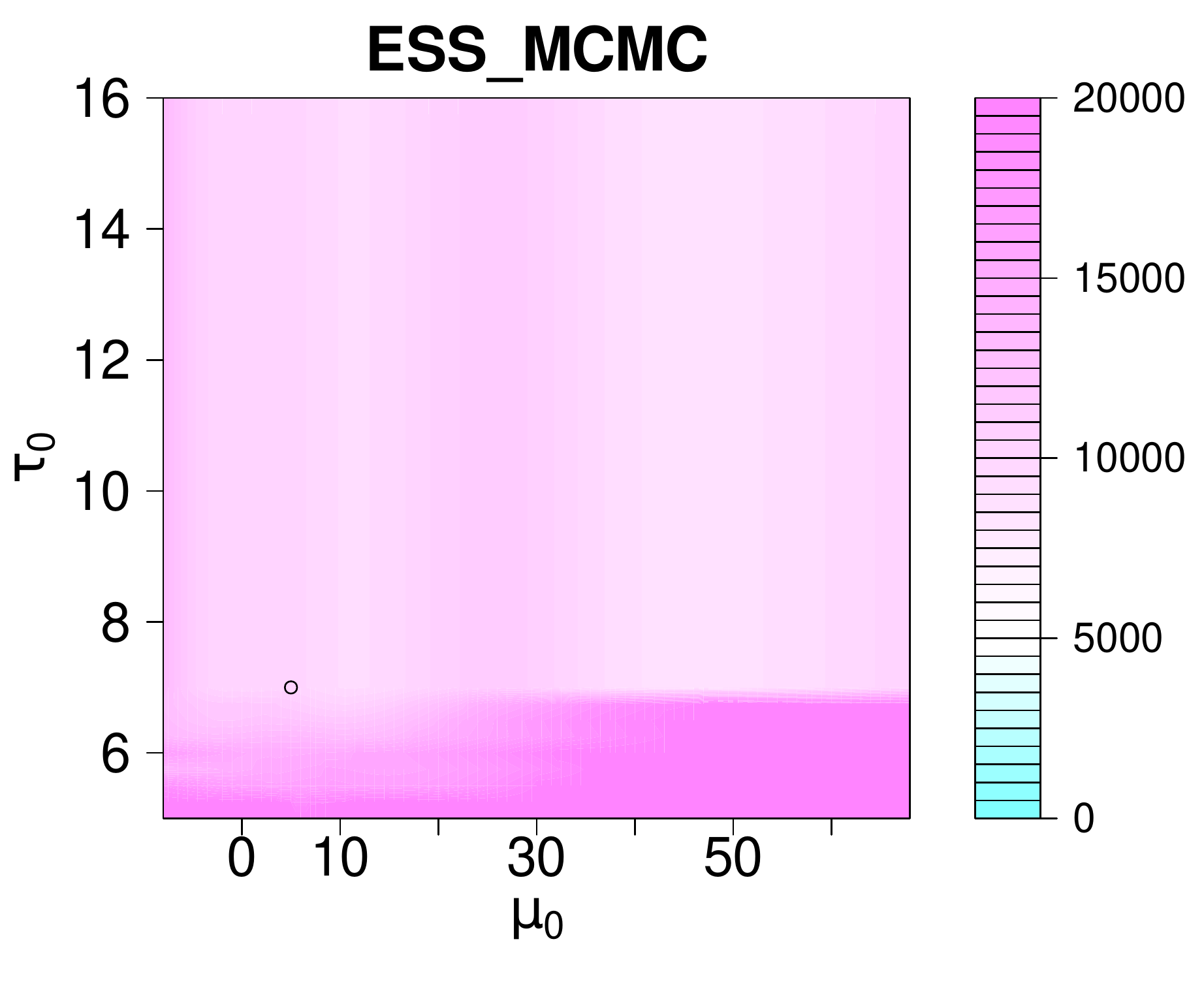}}
\caption{From left, effective sample sizes $\mathrm{ESS}$, $\mathrm{ESS_{IS}}$ and 
	$\mathrm{ESS_{MCMC}}$ given MCMC samples using hyperparameters $\mu_0 = 5$ and $\tau_0 = 7$ 
	(the dot). Here, $\mathrm{ESS_{target}}$ is 5 000 and if the optimization in Step 3 moves the 
	hyperparameters into a region with a lower $\mathrm{ESS}$ we need to re-run the MCMC in Step 2,
	 using the `new' hyperparameters.} \label{fig: N_eff_contourplot}
\end{figure}

The optimization in Step 3 is performed by simulated annealing, \citep{Du_2016, Henderson_2003}, 
using the \emph{optim} function from the \emph {optimx} package \citep{Nash_2011} in R.

\subsection{The lower bound of the expected overall effect}

We illustrate the method using two different $\mathrm{ESS_{target}}$ and two choices of initial values $(\mu_0^\prime = -7$, $\tau_0^\prime = 6)$ and $(\mu_0^\prime = 10$, $\tau_0^\prime = 10)$. The number of iterations of iterative importance sampling ranges between 2 and 3 with an effective sample size between $11~640$ and $14~305$, see (\Cref{Tab: All data}, \ref{Tab: All data1} and \ref{Tab: All data2}). Note that the effective samples size for the MCMC sample is derived based on the hyperparameters in the previous iteration.

The estimated lower bound on the expected 
overall effect over $\mathcal{M}$ is 12.1 $\mu g/ l$ (\Cref{Tab: All data}, \ref{Tab: All data1} and  \ref{Tab: All data2}). 
We also run the method using different random seeds which gives an estimated lower bound on the expected overall effect between 12.07 and 12.24 $\mu g/ l$.
The bound is lower than 19.9 
which is the estimated effect in a standard Bayesian analysis using a flat prior centered at zero ($\mu_0 = 0$ and $\tau_0 = 1~000$). The lower bound is obtained at the corner of the prior region (i.e. $\mu_0 = -8$ and  $\tau_0 = 5$).
The bound obtained by iterative importance sampling is compared to one estimated by a grid search method across a regular grid of hyperparameters (which yields a total of 4 773 hyperparameter values). The grid search method gives a lower expected bound of 12.09 $\mu g/ l$ which is close to the one found using iterative importance sampling. It takes much longer time to run a grid search method, over 10 hours, compared to the iterative importance sampling taking roughly 15 to 25 minutes (\Cref{Tab: All data}, \ref{Tab: All data1} and  \ref{Tab: All data2}). 

Moreover, we run the optimization algorithm (simulated annealing) over the set $\mathcal{M}$, but without the resampling step, (e.g. just re-run the MCMC for each parameter that is evaluated). The method gives a lower bound of 12.13 $\mu g/ l$ and it takes much longer time to run, over 20 hours.

The iterative importance sampling converges with fewer iterations when the initial values are relative close to the hyperparameters corresponding to the lower bound. 
\begin{table}[ht]
\caption{Summary of iterative importance sampling (IIS) of the lower expected value over the set $\mathcal{M}$, $\mathrm{ESS_{target}}$ = 5 000 and initial values $(\mu_0; \tau_0) = (-7; 6)$.} 
\label{Tab: All data} \par
\vskip .2cm
\centerline{\tabcolsep=3truept
\begin{tabular}{c  c c c c c c c c c}
\hline
\textbf{} & Iter & Samples & $\mu_{0_*}$ & $\tau_{0_*}$  & $\mathrm{ESS}$ & $\mathrm{ESS_{MCMC}}$ & $\mathrm{ESS_{IS}}$ & $\hat{\mu}_{(\mu_{0_*}, \tau_{0_*})}$ & Time \\ 
\hline
\textbf{Grid} & \multirow{2}{*}{--} & \multirow{2}{*}{20 000} & \multirow{2}{*}{-8} & \multirow{2}{*}{5} & \multirow{2}{*}{--} & \multirow{2}{*}{--} & \multirow{2}{*}{--} & \multirow{2}{*}{12.091} & \multirow{2}{*}{11.45 h} \\
\textbf{Search} & & & & & &  & & & \\
\hline
\multirow{3}{*}{\textbf{IIS}} & 0 & -- & -7 & 6 & -- & -- & -- & --  & -- \\
& 1 & 20 000 & -7.706 & 5.051 & 27 & 18 005 & 30 & 10.988  & -- \\
& 2 & 20 000 & -7.999 & 5.045 & 11 640 & 12 307  & 18 916 & 12.166  & 11.78 mins \\
\hline
\end{tabular}}
\end{table}

\begin{table}[ht]
\caption{Summary of iterative importance sampling (IIS) of the lower expected value over the set $\mathcal{M}$ and $\mathrm{ESS_{target}}$ = 10 000 and initial values $(\mu_0; \tau_0) = (-7; 6)$.}
\label{Tab: All data1} \par
\vskip .2cm
\centerline{\tabcolsep=3truept
\begin{tabular}{c  c c c c c c c c c}
\hline
\textbf{} & Iter & Samples & $\mu_{0_*}$ & $\tau_{0_*}$  & $\mathrm{ESS}$ & $\mathrm{ESS_{MCMC}}$ & $\mathrm{ESS_{IS}}$ & $\hat{\mu}_{(\mu_{0_*}, \tau_{0_*})}$  & Time \\ 
\hline
\textbf{Grid} & \multirow{2}{*}{--} & \multirow{2}{*}{20 000} & \multirow{2}{*}{-8} & \multirow{2}{*}{5} & \multirow{2}{*}{--} & \multirow{2}{*}{--} & \multirow{2}{*}{--} & \multirow{2}{*}{12.091} & \multirow{2}{*}{11.45 h} \\
\textbf{Search} & & & & & &  & & & \\
\hline
\multirow{3}{*}{\textbf{IIS}} & 0 & -- & -7 & 6 & -- & -- & -- & -- & -- \\
& 1 & 40 000 & -7.900 & 5.035 & 59 & 31 579 & 75 & 12.006  & -- \\
& 2 & 20 000 & -7.998  & 5.136 & 13 899  & 13 901 & 19 997 & 12.135 &  18.73 mins\\
\hline
\end{tabular}}
\end{table}

\begin{table}[ht]
\caption{Summary of iterative importance sampling (IIS) of the lower expected value over the set $\mathcal{M}$ and $\mathrm{ESS_{target}}$ = 10 000 and initial values $(\mu_0; \tau_0) = (10; 10)$.}
\label{Tab: All data2} \par
\vskip .2cm
\centerline{\tabcolsep=3truept
\begin{tabular}{c  c c c c c c c c c}
\hline
\textbf{} & Iter & Samples & $\mu_{0_*}$ & $\tau_{0_*}$  & $\mathrm{ESS}$ & $\mathrm{ESS_{MCMC}}$ & $\mathrm{ESS_{IS}}$ & $\hat{\mu}_{(\mu_{0_*}, \tau_{0_*})}$ & Time \\ 
\hline
\textbf{Grid} & \multirow{2}{*}{--} & \multirow{2}{*}{20 000} & \multirow{2}{*}{-8} & \multirow{2}{*}{5} & \multirow{2}{*}{--} & \multirow{2}{*}{--} & \multirow{2}{*}{--} & \multirow{2}{*}{12.091} & \multirow{2}{*}{11.45 h} \\
\textbf{Search} & & & & & &  & & & \\
\hline
\multirow{3}{*}{\textbf{IIS}} & 0 & -- & 10 & 10 & -- & -- & -- & -- & --\\
& 1 & 40 000 & -6.734 & 5.148 & 5 & 40 000 & 5 & 12.762 & -- \\
& 2 & 20 000 & -7.993 & 5.008 & 3 995 & 13 907 & 5 746 & 12.145 & -- \\
& 3 & 20 000 & -7.995 & 5.221 & 14 305 & 14 305  & 20 000 & 12.103 & 23.72 mins \\
\hline
\end{tabular}}
\end{table}

\subsection{The influence of prior data conflict}

In order to illustrate what happens when the set of prior has a high conflict with data, we specify a new set $\mathcal{M'}$ using the procedure described in subsection 5.2, but with $R=[30, 100]$. This range does not include the median (14.6 $\mu g/ l$) of the observed effects in the 75 studies or the posterior mean from a standard Bayesian analysis with a flat prior. This gives the set 
\begin{equation}
\mathcal{M'} =  \left\{\begin{array}{c}
										 42 \leq \mu_0 \leq 88 \\ 
										  5 \leq \tau_0 \leq  11 \\
										r'(\mu_0,\tau_0) - 0.90 \geq 0
										\end{array} \right \}, \nonumber
\end{equation}
where $r'(\mu_0,\tau_0) = \tau_0 - (-0.011 \mu^2_0 + 1.427 \mu_0 - 35.104)$.

Shifting the prior region from $\mathcal{M}$ to $\mathcal{M'}$, (i.e. towards higher values of $\mu$), increases the lower 
bound of the expected overall effect from 12.1 to 26.8 $\mu g/ l$, (\Cref{Tab: elicited range of data_1}). The lower bound is in this case obtained for a less precise prior $\tau_0 = 10.5$.  

We also compare the bound obtained by iterative importance sampling to the one estimated by a grid search method across a regular grid of hyperparameters (which yields a total of 1 649 hyperparameter values). The grid search method gives a slightly lower expected bound of 26.7 $\mu g/ l$, but takes much longer time, 4 hours, compared to 13 minutes for the iterative importance sampling (\Cref{Tab: elicited range of data_1}).

\begin{table}[ht]
\caption{Summary of iterative importance sampling of the lower expected values over the set $\mathcal{M'}$ and $\mathrm{ESS_{target}}$ = 10 000 and initial values $(\mu_0; \tau_0) = (60; 11)$.}
\label{Tab: elicited range of data_1} \par
\vskip .2cm
\centerline{\tabcolsep=3truept
\begin{tabular}{c  c c c c c c c c c}
\hline
\textbf{} & Iter & Samples & $\mu_{0_*}$ & $\tau_{0_*}$  & $\mathrm{ESS}$ & $\mathrm{ESS_{MCMC}}$ & $\mathrm{ESS_{IS}}$ & $\hat{\mu}_{(\mu_{0_*}, \tau_{0_*})}$ & Time \\ 
\hline
\textbf{Grid} & \multirow{2}{*}{--} & \multirow{2}{*}{20 000} & \multirow{2}{*}{57} & \multirow{2}{*}{10.5} & \multirow{2}{*}{--} & \multirow{2}{*}{--} & \multirow{2}{*}{--} & \multirow{2}{*}{26.754} & \multirow{2}{*}{3.70 h} \\
\textbf{Search} & & & & & &  & & & \\
\hline
\multirow{2}{*}{\textbf{IIS}} & 0 & -- & 60 & 11 & -- & -- & -- & -- & -- \\
& 1 & 40 000 & 59.064  & 10.859 & 15 371 & 17 533 & 35 067 &  26.811  & 12.5 mins \\
\hline
\end{tabular}}
\end{table}

\section{Conclusions}
\label{sec:conclusions}

Robust Bayesian analysis (i.e. Bayesian inference under a set of priors) offers a way to quantify epistemic uncertainty by bounded instead of precise probability. This type of analysis is useful for evaluating sensitivity to the choice of prior. It is also a solution for Bayesian inference when prior information is upfront given as a set of distributions.

We have derived an expression for the effective sample size of importance sampling with correlated MCMC samples, which ensures reliable samples from both MCMC and importance sampling procedures when combining them. The combination of iterative importance sampling with MCMC sampling was used for robust Bayesian analysis on an existing meta-analysis based on a random effects model \citep{Bernes_2015}. The estimated lower bound on the expected overall effect of the intervention in the meta-analysis is a conservative estimate compared to what would have been the result from selecting one prior in the set and using a standard Bayesian analysis.

Iterative importance sampling with MCMC sampling 
allows for robust Bayesian analysis on a wider range of models not limited to conjugate models. The flexibility in the choice of model may allow for more applications of robust Bayesian analysis. The method was demonstrated on a relatively simple model with two different choices of prior sets.
It would be useful to evaluate the proposed method on more complex models to further explore the theoretical and practical challenges associated with robust Bayesian analysis; as well as to investigate how other discrepancy measures such as those proposed by \citep{Martino_2017} and the Kullback-Leibler (KL) divergence measure could be used.

\section*{Supplementary material}

The Stan and R codes to run the analysis, except the data, are available through the following link: https://github.com/Iraices/IIS\_MCMC.  

\section*{Acknowledgement(s)}
We thank Claes Bernes for supporting this work with data on the meta-analysis in the evidence synthesis example. 

\section*{Disclosure statement}
No potential conflict of interest was reported by the author(s).

\section*{Funding}
This work was supported by the Swedish research council FORMAS through the project ``Scaling up uncertain environmental evidence'' (219-2013-1271) and the strategic research areas BECC (Biodiversity and  Ecosystem Services in a Changing Climate) and MERGE (Modelling the Regional and Global Climate/Earth system).

\bibliographystyle{plainnat}
\bibliography{references_IIS_MCMC}

\appendix
\section{Calculations}  \label{Calculations}

Explicit calculations for \cref{Eq:Var_g_tilde_back} are given here.
Recall \cref{Eq:Var_g_tilde_back}
\begin{align*}
\Var_q \Bigl(g(X)\Bigr) & = \Var_q\Bigl( (f(X)-\mu) w_p(X) \Bigr)
\end{align*}
We will use that, for any random variable $h(X)$,
\begin{align*}
    \mathrm{E}_q\Bigl(h(X)w_p(X)\Bigr)
    &=\int h(x)w_p(x)q(x)dx
    =\int h(x)\frac{c p(x)}{q(x)} q(x)dx \\
    &=c\int h(x)p(x)dx
    =c\mathrm{E}_p\Bigl(h(X)\Bigr),
\end{align*}
which gives
\begin{align}
\begin{split}
\Var_q \Bigl(h(X)w_p(X) \Bigr)
= c \mathrm{E}_p \Bigl( h^2(X) w_p(X) \Bigr)
- c^2 \mathrm{E}_p\Bigl( h(X) \Bigr)^2
\end{split}
\label{eq:appendix:Eq_hw}
\end{align}
Taking $h(X) = f(X)-\mu$ and noting that $\mathrm{E}_p \Bigl(h(X)\Bigr)=0$, the expression in \cref{Eq:Var_g_tilde_back} expands as follows
\begin{align*}
  \Var_q\Bigl(g(X)\Bigr) & = \Var_q\Bigl( (f(X)-\mu) w_p(X) \Bigr) =
  c \mathrm{E}_p\Bigl( \left(f(X)-\mu\right)^2 w_p(X) \Bigr)
\end{align*}
Applying a second order delta method to the last expectation at $\mathrm{E}_p\Bigl(w_p(X)\Bigr)$ and $\mathrm{E}_p \Bigl(h(X) \Bigr)$, gives (see appendix B for details)
\begin{align}
\Var_q\Bigl(g(X)\Bigr) & \approx c \Bigl(\mathrm{E}_p\Bigl(h(X)\Bigr)^2 \mathrm{E}_p \Bigl(w_p(X)\Bigr) + \mathrm{E}_p\Bigl(h(X)\Bigr) \mathrm{Cov}_p \Bigl(w_p(X),h(X)\Bigr) \nonumber \\
& +  \mathrm{E}_p \Bigl(w_p(X)\Bigr) \Var_p\Bigl(h(X)\Bigr)\Bigr) \label{eq: matching_appendixB} \\
& = c \mathrm{E}_p \Bigl(w_p(X)\Bigr) \mathrm{E}_p \Bigl(h^2(X)\Bigr)  \\
& =
c \mathrm{E}_p\Bigl( \left(f(X)-\mu\right)^2 \Bigr)  
\mathrm{E}_p\Bigl( w_p(X) \Bigr)
\\ &= 
c \Var_p\Bigl( f(X) \Bigr) \mathrm{E}_p\Bigl( w_p(X) \Bigr).
\label{eq:appendix:varqwpx}
\end{align}

Using \cref{eq:appendix:varqwpx} and the equality $\Var_p(f(X))=N \Var_p(\overline{\mu})$ we obtain
\begin{align}
\Var_q(g(X)) & \approx N \Var_p(\overline{\mu}) c \mathrm{E}_p\Bigl( w_p(X) \Bigr)
\\ & =
N \Var_p(\overline{\mu}) \mathrm{E}_q\Bigl( w^2_p(X) \Bigr)
\\ & =
N \Var_p(\overline{\mu}) \left(\frac{1}{N}\sum_{i=1}^N w_p^2(X_i)\right).
\end{align}
Here we have used that \cref{eq:appendix:Eq_hw} with $h(X)=w_p(X)$ gives the equality $c \mathrm{E}_p( w_p(X) ) = \mathrm{E}_q( w_p^2(X) )$.

\section{Delta method} 

Let $g(u,v)$ be a function twice differentiable at $(u,v) = (a_1,a_2)$. Then, the second order Taylor polynomial for $g(u,v)$ near the point $(u,v) = (a_1,a_2)$ is:
\begin{align}
g(u,v)_{(a_1,a_2)} & = g(a_1,a_2) + \frac{\partial g}{\partial u}(a_1,a_2)(u - a_1) + \frac{\partial g}{\partial v}(a_1,a_2)(v - a_2) \nonumber \\
& + \frac{1}{2} \frac{\partial^2 g}{\partial u^2}(a_1,a_2)(u - a_1)^2 + \frac{1}{2} \frac{\partial^2 g}{\partial u \partial v}(a_1,a_2)(u - a_1) (v - a_2) \nonumber \\
& + \frac{1}{2} \frac{\partial^2 g}{\partial v^2}(a_1,a_2)(v - a_2)^2.
\end{align}

Now, we can use the second order Taylor polynomial approximation to estimate the mean (this is also known as second order delta method).

Let $U$ and $V$ be random variables with mean $\theta_1 = \mathrm{E}_p(U)$ and $\theta_2 = \mathrm{E}_p(V)$ respectively.
\begin{align}
\mathrm{E}_p\Bigl(g(U,V)\Bigr)_{(\theta_1,\theta_2)} & \approx g(\theta_1,\theta_2) + \frac{\partial g}{\partial u}(\theta_1,\theta_2) \mathrm{E}_p(U - \theta_1) + \frac{\partial g}{\partial v}(\theta_1,\theta_2) \mathrm{E}_p(V - \theta_2) \nonumber \\
& +  \frac{1}{2} \frac{\partial^2 g}{\partial u^2}(\theta_1,\theta_2)\mathrm{E}_p\Bigl( (U - \theta_1)^2\Bigr)  \nonumber \\
& + \frac{1}{2} \frac{\partial^2 g}{\partial u \partial v}(\theta_1,\theta_2)\mathrm{E}_p \Bigl( (U - \theta_1) (V - \theta_2) \Bigr) \nonumber \\
& + \frac{1}{2} \frac{\partial^2 g}{\partial v^2}(\theta_1,\theta_2)\mathrm{E}_p \Bigl( (V - \theta_2)^2 \Bigr).
\end{align}

Note that $\mathrm{E}_p(U - \theta_1) = 0$ and $\mathrm{E}_p(V - \theta_2) = 0$. 
Taking $g(u,v)\coloneqq v^2 u$ where $V = h(X)$ and $U = w_p(X)$ (to match notation in \cref{eq: matching_appendixB})  
\begin{align}
\mathrm{E}_p\Bigl(g(U,V)\Bigr)_{(\theta_1,\theta_2)} & \approx g(\theta_1,\theta_2) + \theta_2 \mathrm{Cov}_p(U,V) +  \theta_1 \Var_p(V) \nonumber \\
 & \approx \Bigl(\mathrm{E}_p(V)\Bigr)^2 \mathrm{E}_p(U) + \mathrm{E}_p(V) \mathrm{Cov}_p(U,V) +  \mathrm{E}_p(U) \Var_p(V)   \\
 & \approx \Bigl(\mathrm{E}_p\Bigl(h(X)\Bigr)\Bigr)^2 \mathrm{E}_p\Bigl(w_p(X)\Bigr) + \mathrm{E}_p\Bigl(h(X)\Bigr) \mathrm{Cov}_p\Bigl(w_p(X),h(X)\Bigr)  \nonumber \\
 & +  \mathrm{E}_p(w_p(X)) \Var_p\Bigl(h(X)\Bigr). 
\end{align}

\section{Proof of existence of minimum} \label{proof_minimum}

To guarantee the minimum exist in our example, it is enough to prove that the expectation is 
continuous as a function of $(\mu_0, \tau_0)$ on $[-8,68] \times [5,16]$.

By Fubini's theorem in our example, we have 
\begin{equation}
	\mathrm{E}_{(\mu_0, \tau_0)}(\mu) \coloneqq \displaystyle
	\int_{1}^{\tau_0} \int_1^5 \int_{- \infty}^{+\infty}  \mu \cdot p_{(\mu_0, \tau_0)}(\mu,\tau_\mu,k) \,d\mu \,dk \,d\tau_\mu, \label{Eq:expectation}
\end{equation}
where 
\begin{align*}
	p_{(\mu_0, \tau_0)}(\mu,\tau_\mu,k) &= \frac{1}{c(\mu_0,\tau_0)} \cdot \Biggl(\prod_{i=1}^N  \frac{1}{\sqrt{\sigma^2_i + k^2\tau_\mu^2}}\Biggr)
	\exp  \Biggl\{ -\frac{1}{2} \sum_{i = 1}^N \frac{(y_i - \mu)^2}{\sigma^2_i + k^2\tau_\mu^2} \Biggr\} \cdot\\ \cdot
	& \frac{1}{ \tau_\mu} \exp  \Bigl\{ -\frac{1}{2}\frac{(\mu - \mu_0)^2}{\tau_\mu^2} \Bigr\}  \nonumber
\end{align*}
is the posterior probability distribution. The proportionality constant $c(\mu_0,\tau_0)$ is given by 
\begin{align}
\label{Eq: prop_constant}
	c(\mu_0,\tau_0)  & =  \int_{1}^{\tau_0} \int_1^5 
	\Biggl( \prod_{i=1}^N \frac{1}{\sqrt{\sigma^2_i + k^2\tau_\mu^2}}\Biggr)
	 \frac{1}{\tau_\mu} \\
	& \Biggl[ \int_{- \infty}^{+\infty} 
	\exp  \Bigl\{-\frac{1}{2} \sum_{i = 1}^N \frac{(y_i - \mu)^2}{\sigma^2_i + k^2\tau_\mu^2} \Bigr\} \cdot 
	 \exp  \Bigl\{ -\frac{1}{2}\frac{(\mu - \mu_0)^2}{\tau_\mu^2} \Bigr\} 
	  \,d\mu \Biggr ] \,dk  \,d\tau_\mu \nonumber \\ 
	&= \int_{1}^{\tau_0} \int_1^5 
	\Biggl( \prod_{i=1}^N \frac{1}{\sqrt{\sigma^2_i + k^2\tau_\mu^2}}\Biggr)
	 \frac{1}{ \tau_\mu} \\
	& \Biggl[ \int_{- \infty}^{+\infty} 
	\exp  \Biggl\{-\frac{1}{2} \Biggl( a(\tau_\mu, k, \mu_0)\mu^2 -2b(\tau_\mu, k, \mu_0) \mu + c(\tau_\mu, k, \mu_0)\Biggr)  \Biggr\} 
	  \,d\mu \Biggr ] \,dk  \,d\tau_\mu \nonumber   
\end{align}
where
\begin{align}
	a(\tau_\mu, k, \mu_0) & = \sum_{i =1}^N \frac{1}{\sigma_i^2 + k^2\tau_\mu^2} + \frac{1}{\tau_\mu^2}, \label{Eq:a} \\ 
	b(\tau_\mu, k, \mu_0) & = \sum_{i =1}^N \frac{y_i}{\sigma_i^2 + k^2\tau_\mu^2} + \frac{\mu_0}{\tau_\mu^2}, \label{Eq:b}\\
	c(\tau_\mu, k, \mu_0) & = \sum_{i =1}^N \frac{y_i^2}{\sigma_i^2 + k^2\tau_\mu^2} + \frac{\mu_0^2}{\tau_\mu^2} \label{Eq:c}.
\end{align}
Since $a(\tau_\mu, k, \mu_0) > 0$, then by Lemma \ref{lemma_1} 
in \eqref{Eq: prop_constant}, we get that the integral with respect to $\mu$ is  
\begin{equation}
	\sqrt{2\pi}\cdot a(\tau_\mu, k, \mu_0)^{\frac{-1}{2}} \cdot \exp \Biggl\{-\frac{1}{2} \Biggl(c(\tau_\mu, k, \mu_0) - \frac{b(\tau_\mu, k, \mu_0)^2}{a(\tau_\mu, k, \mu_0)}\Biggr)\Biggr\}.
\end{equation}

Note that $a(\tau_\mu, k, \mu_0)$, $b(\tau_\mu, k, \mu_0)$ and $c(\tau_\mu, k, \mu_0)$ are continuously differentiable and  $\tau_\mu \in [1, \tau_0]$ and $k \in [1,\, 5]$. 
Thus, by Leibniz integral rule it follows that
$c(\mu_0,\tau_0)$ is continuously differentiable on $[-8,68] \times [5,16]$. 

Analogously, applying Lemma \ref{lemma_2} 
to \cref{Eq:expectation}, we get that the integral with respect to $\mu$ is 
\begin{equation}
\sqrt{2\pi}\cdot a(\tau_\mu, k, \mu_0)^{\frac{-3}{2}} \cdot b(\tau_\mu, k, \mu_0) \cdot \exp \Biggl\{-\frac{1}{2} \Biggl (c(\tau_\mu, k, \mu_0) - \frac{b(\tau_\mu, k, \mu_0)^2}{a(\tau_\mu, k, \mu_0)} \Biggr) \Biggr\}.
\end{equation}
Furthermore, by Leibniz integral rule it yields that
$\mathrm{E}_{(\mu_0, \tau_0)}(\mu)$ is continuously differentiable on $[-8,68] \times [5,16]$.  

\section {Lemmas} 

\begin{lemma} \label{lemma_1}
Let $a > 0, b \in \mathbb{R}$ and $c \in \mathbb{R}$. It holds  that:
\begin{align}
	& \int_{-\infty}^{+\infty} \exp \Biggl\{-\frac{1}{2}(ax^2 -2bx + c)\Biggr\} \,dx  \\
	& = \sqrt{2\pi} \cdot a^{-\frac{1}{2}} \cdot \exp \Biggl \{  -\frac{1}{2} \Biggl(c - \frac{b^2}{a} \Biggr) \Biggr \}.
\end{align}
\end{lemma}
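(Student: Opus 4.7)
The plan is to reduce the integral to the standard Gaussian integral $\int_{-\infty}^{+\infty} e^{-u^2/2}\,du = \sqrt{2\pi}$ by completing the square in the exponent. Since $a>0$, the quadratic $ax^2 - 2bx + c$ can be rewritten as $a\left(x - \tfrac{b}{a}\right)^2 + \left(c - \tfrac{b^2}{a}\right)$; this identity is purely algebraic and can be verified by expanding.

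Substituting this into the exponent, the factor $\exp\left\{-\tfrac{1}{2}\left(c - \tfrac{b^2}{a}\right)\right\}$ does not depend on $x$ and can be pulled outside the integral. What remains is
\begin{equation*}
\int_{-\infty}^{+\infty} \exp\left\{-\frac{a}{2}\left(x - \tfrac{b}{a}\right)^2\right\} dx.
\end{equation*}
Next I would perform the change of variables $u = \sqrt{a}\left(x - \tfrac{b}{a}\right)$, so that $du = \sqrt{a}\, dx$ and the limits of integration remain $(-\infty, +\infty)$ because $\sqrt{a}>0$. The integral becomes $a^{-1/2}\int_{-\infty}^{+\infty} e^{-u^2/2}\,du = a^{-1/2}\sqrt{2\pi}$.

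Combining the pulled-out factor with this evaluation yields exactly the claimed formula. There is essentially no obstacle: the only subtlety is the hypothesis $a>0$, which is needed both to make the square root $\sqrt{a}$ real and, more importantly, to ensure the integrand decays at infinity so that the integral converges. The hypotheses $b \in \mathbb{R}$ and $c \in \mathbb{R}$ are used only to ensure the completed-square identity and the constant factor $\exp\{-\tfrac{1}{2}(c - b^2/a)\}$ are well-defined real numbers.
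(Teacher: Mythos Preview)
Your proof is correct and follows essentially the same approach as the paper: both complete the square in the exponent and then evaluate the remaining Gaussian integral (you via the substitution $u=\sqrt{a}(x-b/a)$ and the standard integral $\int e^{-u^2/2}\,du=\sqrt{2\pi}$, the paper by recognizing the integrand as the density of $N(b/a,1/a)$, which integrates to $1$). These are the same argument up to cosmetic differences.
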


\begin{proof}
	By completing the square, we obtain 
	\begin{align}
		& \int_{-\infty}^{+\infty} \exp \Biggl\{-\frac{1}{2}\Biggl(\frac{x - \frac{b}{a}} {a^{-\frac{1}{2}}}\Biggr) ^2 \Biggr\}
		\cdot \exp \Biggl\{-\frac{1}{2}\Biggl(c - \frac{b^2}{a}\Biggr) \Biggr \}  \,dx \\ 
		& = \sqrt{2 \pi} \cdot a^{-\frac{1}{2}} \cdot \exp \Biggl\{-\frac{1}{2} \Biggl(c - \frac{b^2}{a}\Biggr)\Biggr\} \cdot 
		\int_{-\infty}^{+\infty} \frac{1}{\sqrt{2 \pi} \cdot a^{-\frac{1}{2}}}
		\exp \Biggl\{-\frac{1}{2} \Biggl(\frac{x - \frac{b}{a}} {a^{-\frac{1}{2}}}\Biggr) ^2 \Biggr\} \,dx.
	\end{align}

Note that the integrand is the probability density function of a normally distributed 
random variable with mean $\frac{b}{a}$ and variance $\frac{1}{a}$, (i.e. $N \Bigl (\frac{b}{a}, \frac{1}{a}\Bigr)$). 
Thus, the desired result immediately follows.
\end{proof}

\begin{lemma} \label{lemma_2}
Let $a > 0, b \in \mathbb{R}$ and $c \in \mathbb{R}$. It holds  that:
\begin{align}
	& \int_{-\infty}^{+\infty} x \exp \Biggl\{-\frac{1}{2}(ax^2 -2bx + c)\Biggr\} \,dx  \\
	& = \sqrt{2\pi} \cdot a^{-\frac{3}{2}}\cdot b \cdot \exp \Biggl \{ -\frac{1}{2}\Biggl(c - \frac{b^2}{a} \Biggr) \Biggr \}.
\end{align}
\end{lemma}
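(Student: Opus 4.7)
The plan is to mirror the strategy used for Lemma \ref{lemma_1}, completing the square in the exponent so that the integrand becomes (a constant multiple of) $x$ times the density of a normal distribution, and then evaluating the integral as the mean of that distribution times the normalizing factor.

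Concretely, I would first rewrite
\[
 a x^2 - 2bx + c = a\Bigl(x - \tfrac{b}{a}\Bigr)^2 + \Bigl(c - \tfrac{b^2}{a}\Bigr),
\]
so that the $x$-independent piece $\exp\{-\tfrac{1}{2}(c - b^2/a)\}$ factors out of the integral. What remains inside is
\[
\int_{-\infty}^{+\infty} x \exp\Bigl\{-\tfrac{1}{2}\Bigl(\tfrac{x - b/a}{a^{-1/2}}\Bigr)^2\Bigr\} dx,
\]
which I would rescale by the normalizing constant $\sqrt{2\pi}\cdot a^{-1/2}$ to recognize the integrand as $x$ times the density of an $N(b/a,\,1/a)$ random variable.

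The final step is to use the fact that for $Y\sim N(b/a, 1/a)$ we have $\mathrm{E}(Y)=b/a$, so the integral in brackets evaluates to $\sqrt{2\pi}\cdot a^{-1/2}\cdot (b/a) = \sqrt{2\pi}\cdot a^{-3/2}\cdot b$. Combining with the factored exponential yields the stated right-hand side.

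There is essentially no hard step here: the only thing to be careful with is bookkeeping of the factors of $a$ coming from the rescaling (an $a^{-1/2}$ from the normalizing constant of the Gaussian and an additional $a^{-1}$ from the mean $b/a$, giving $a^{-3/2}$ in total). Since $a>0$ by assumption, the completion of the square is valid and the integrand is absolutely integrable, so no convergence issues arise. An equally short alternative I would note in passing is to observe that this integral equals $\partial/\partial b$ of the integral in Lemma \ref{lemma_1}, but the direct density-recognition approach above seems cleanest given that Lemma \ref{lemma_1} has just been proved by the same technique.
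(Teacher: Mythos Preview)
Your proposal is correct and follows essentially the same route as the paper: complete the square, factor out the $x$-independent exponential and the Gaussian normalizing constant $\sqrt{2\pi}\,a^{-1/2}$, and then identify the remaining integral as $\mathrm{E}(Y)=b/a$ for $Y\sim N(b/a,1/a)$. The bookkeeping of the $a$-powers matches the paper's result exactly.
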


\begin{proof}
By completing the square, we obtain 
\begin{align}
	& \int_{-\infty}^{+\infty} x \cdot \exp \Biggl\{-\frac{1}{2}\Biggl(\frac{x - \frac{b}{a}} {a^{-\frac{1}{2}}}\Biggr) ^2 \Biggr\}
	\cdot \exp \Biggl\{-\frac{1}{2} \Biggl(c - \frac{b^2}{a} \Biggr) \Biggr \}  \,dx \\ 
	& = \sqrt{2 \pi} \cdot a^{-\frac{1}{2}} \cdot \exp \Biggl\{-\frac{1}{2} \Biggl(c - \frac{b^2}{a} \Biggr)\Biggr\} \cdot 
	\int_{-\infty}^{+\infty} \frac{1}{\sqrt{2 \pi} \cdot a^{-\frac{1}{2}}}
		x \exp \Biggl\{-\frac{1}{2} \Biggl(\frac{x - \frac{b}{a}} {a^{-\frac{1}{2}}}\Biggr) ^2 \Biggr\} \,dx.
\end{align}

Note that the previous integral is the expected value of a normally distributed random variable with 
mean $\frac{b}{a}$ and variance $\frac{1}{a}$, (i.e. $N \Bigl (\frac{b}{a}, \frac{1}{a}\Bigr)$). 
Thus, the desired result immediately follows.
\end{proof}

\end{document}